\newtheorem{observation}{Observation}
\newcommand{\miniparagraph}[1]{\textbf{#1}\; }
\newcommand{\miniheading}[1]{\noindent \textbf{#1}.}
\title{Maximum Flows in Parametric Graph Templates}
\author{Tal Ben-Nun\inst{1}\orcidID{0000-0002-3657-6568} \and Lukas Gianinazzi (\Letter)  \inst{1} \orcidID{0000-0001-5975-4526} \and Torsten Hoefler\inst{1}\orcidID{0000-0002-1333-9797} \and Yishai Oltchik\inst{2}}
\institute{ETH Zurich, Universit{\"a}tstrasse 6, Z{\"u}rich \\
\email{\{talbn,glukas,htor\}@inf.ethz.ch} 
\and NVIDIA \email{yoltchik@nvidia.com}}
\begin{document}

\maketitle

\begin{abstract}
Execution graphs of parallel loop programs exhibit a nested, repeating structure. 
We show how such graphs that are the result of nested repetition can be represented by succinct parametric structures. 
	This \emph{parametric graph template} representation allows us to reason about the execution graph of a parallel program at a cost that only depends on the program size.	
	We develop structurally-parametric polynomial-time algorithm variants of maximum flows. When the graph models a parallel loop program, the maximum flow provides a bound on the data movement during an execution of the program. By reasoning about the structure of the repeating subgraphs, we avoid explicit construction of the instantiation (e.g., the execution graph), potentially saving an exponential amount of memory and computation. 
Hence, our approach enables graph-based dataflow analysis in previously intractable settings.
\end{abstract}

\keywords{Graph algorithms \and Graph theory \and Maximum flow}

\setcounter{page}{1}

\section{Introduction}

\begin{figure}[b]
\begin{subfigure}[t]{0.35\linewidth}
\centering
	\includegraphics[width=0.75\textwidth]{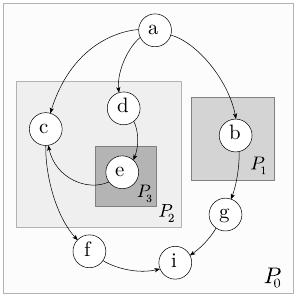}
	\caption{Param. Graph Template $\mathcal{G}$.} \label{fig:intro-example:template}
\end{subfigure}
\quad
\begin{subfigure}[t]{0.27\linewidth}
\centering
\raisebox{2em}{
	\includegraphics[width=0.45\textwidth]{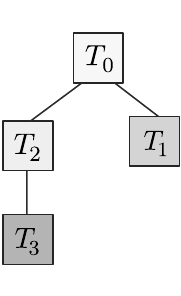}}
	\caption{Template tree of $\mathcal{G}$.} \label{fig:intro-example:tree}
\end{subfigure}
\quad
\begin{subfigure}[t]{0.30\linewidth}
	\includegraphics[width=\textwidth]{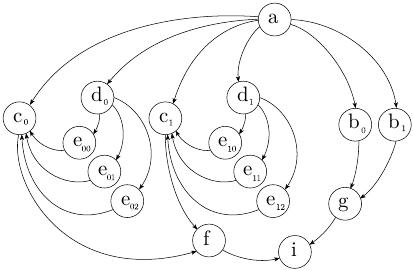}
	\caption{Instantiation of $\mathcal{G}$.} \label{fig:intro-example:instance}
\end{subfigure}
\caption{Illustration of the parametric graph template $\mathcal{G}$ with templates $T_0=\{a, b, c, d, e, f, g, i\}$, $T_1=\{b\},T_2=\{c, d, e\}, T_3=\{e\}$; parameters $P_0=1, P_1=2, P_2=2, P_3=3$; and $h=2$. 
}\vspace{-0em}
\label{fig:intro-example}
\end{figure}

Parallel program analysis approaches to optimize data movement and program transformation commonly rely on graph algorithms~\cite{DBLP:journals/computer/ChuHLE80,DBLP:journals/tc/ShenT85,DBLP:conf/spaa/KwasniewskiBGCS21,llvm}. These problems concern an execution graph, where vertices model computation and edges model data movement. Maximum flows provide an algorithmic measure of the overall data movement. Such execution graphs contain highly repetitive substructures. Other application areas also face repeating graph structures, for example, computational biology~\cite{Ginsburg2016} and network topology~\cite{dragonfly,slimfly,xpander}.

%

The naive approach is to directly work on the graphs and apply classic algorithms. However, this is prohibitively slow. For example, execution graphs can have billions of vertices or have a parametric size. Another approach is to design domain-specific representations and solutions~\cite{DBLP:journals/corr/abs-1802-04730,DBLP:journals/ijpp/Feautrier92}. Having a more general-purpose framework would allow sharing progress across domains. 

We observe that many application-relevant graphs follow a model of \emph{nested repetition}, where a small \emph{template graph} is repeated a \emph{parametric} number of times~\cite{sdfg}. In this work, we propose a representation of such hierarchically repeating graphs, which we call \emph{parametric graph templates}, and provide algorithms for extensions to the classical graph problem of maximum $s$-$t$ flow.

The main challenge lies in avoiding the naive solution of materializing the full graph (which we call \emph{instantiation}) and using a classic algorithm, which would negate any time savings.
%
Instead, we carefully study the structural relationship between the template and the potentially exponentially larger instantiated graph. We discover and exploit symmetries in the instantiation process. This allows us to answer graph problems with a runtime that only depends on the size of the succinct representation, enabling asymptotic time and space savings compared to a naive approach that explicitly performs the nested repetition.





\subsection{Parametric Graph Templates}

Next, we introduce our model and give some examples. Our goal is to represent graphs with a hierarchically repeating structure, where the number of repetitions depends on some parameters. This will allows us to represent parallel loop programs and their executions.
A parametric graph template with $k$ parameters $\mathcal{G}=(G, \mathcal{T}, \mathcal{P})$ contains a (potentially weighted) and directed \emph{template graph} $G=(V, E)$ with $n$ vertices $V$, $m$ edges $E$ and edge weights $w: E\mapsto \mathbb{R}$, a list of \emph{templates} $\mathcal{T}=T_0, T_2, \dotsc, T_{k-1}$, each with $\emptyset \ne T_i \subseteq V$, and a list of positive integer \emph{parameters} $\mathcal{P}=P_0, \dotsc, P_{k-1}$ (see \Cref{fig:intro-example:template}).
The templates follow a nested structure, meaning that for every pair of templates they are either disjoint or one of them is strictly contained in the other one (for all $i\neq j$, $T_i \cap T_j = \emptyset$ or  $T_i \subset T_j$ or $T_j \subset T_i$). In particular, the templates form a laminar set family~\cite{DBLP:conf/esa/CheriyanJR99}.

We assume that there is a \emph{root template} $T_0=V$. Hence, the subset relation on the templates induces a \emph{template tree} (see \Cref{fig:intro-example:tree}). We denote its height by $h$. If a template $T$ is contained in another template $T'$ (i.e., $T\subset T'$), then $T$ is a descendant of $T'$ (and $T'$ is an ancestor of $T$). A template $T$ is a parent of $T'$ (and $T'$ is a child of $T$) if $T'$ is the direct descendant of $T$. 

To create an \emph{instantiation} of a parametric graph template $\mathcal{G}$, repeatedly rewrite it as follows (see \Cref{fig:intro-example:instance}). As long as there is more than one template, pick a leaf template $T_i$. For each vertex $v$ in $T_i$ create $P_i$ copies $v_1, \dotsc, v_{P_i}$ called instances of $v$, replacing $v$ in $V$. The set of vertices with the same subscript are called an \emph{instance} of $T_i$. For each edge $e=(u, v)$ with both endpoints in $T_i$, create $P_i$ instances $e_1=(u_1, v_1), \dotsc, e_{P_i}=(u_{P_i}, v_{P_i})$, replacing $e$ in $E$. For each edge $e=(u, v)$ with one endpoint $u$ in $T_i$, create $P_i$ instances $e_1=(u_1, v), \dotsc, e_{P_i}=(u_{P_i}, v)$, replacing $e$ in $E$. Proceed symmetrically for each edge $e=(u, v)$ with one endpoint $v$ in $T_i$. Then, remove the template $T_i$ and its parameter $P_i$. 

In \Cref{sec:loop-parallel-programs} we will represent nested parallel loop programs as parametric graph templates. In such a representation, cuts and flows correspond to data movement in a parallel execution.

 \subsection{Related Work} 

\miniparagraph{Graph Grammars}\cite{DBLP:conf/stacs/Courcelle88,DBLP:conf/focs/EhrigPS73,DBLP:journals/jacm/Pavlidis72,DBLP:journals/mst/BauderonC87,DBLP:conf/fct/Engelfriet89} describe a (possibly infinite) language of graphs compactly with a set of construction rules. There is a wide variety of such ways of constructing a graph, differing in expressive power.  A classic problem for graph grammars is to decide whether a graph can be constructed from a given grammar (parsing). In contrast to graph grammars, we are not primarily concerned with expressing an infinite set of graphs, but instead with a succinct representation of a graph and algorithmic aspects of solving graph problems efficiently on this succinct representation.

\miniparagraph{Hierarchical Graphs}\cite{DBLP:journals/jcss/DrewesHP02} model graphs where edges expand to other, possibly hierarchical graphs. They are a variant of context-free hyperedge replacement grammars that incorporate a notion of hierarchy. The authors consider graph transformations (i.e., replacing subgraphs within other subgraphs). However, their method does not include parametric replication. This makes it unsuitable for modeling variably-sized execution graphs. 

\miniparagraph{Nested graphs}\cite{DBLP:journals/tois/PoulovassilisL94} allow ``hypernodes'' to represent other nested graphs. The authors focus on the case where a node represents a \emph{fixed} nested graph. This precludes nested graphs from effectively representing graphs of a parametric size. 
%

\miniparagraph{Edge-Weight Parametric Problems}
Several graph problems have been generalized to the edge-weight parametric case, where edge weights are functions of one or several parameters $\mu_i$. This includes maximum $s$-$t$ flow / minimum $s$-$t$ cut~\cite{DBLP:journals/dam/AnejaCN03,DBLP:journals/mp/GranotMQT12,DBLP:journals/siamcomp/GalloGT89}, (global) minimum cut~\cite{DBLP:journals/mp/AissiMMQ15,DBLP:conf/stoc/Karger16} and shortest paths~\cite{DBLP:conf/soda/Erickson10,DBLP:journals/dam/KarpO81}. The solution is then a piecewise characterization of the solution space. Usually, only linear (or otherwise heavily restricted) dependency of the edge weights on the parameters have been solved.
%
%
For \emph{edge-weight parametric minimum $s$-$t$ cuts}, the problem can be solved in polynomial time when each edge $e$ has weight $\min(c(e), \mu)$ for constants $c(e)$ and a single parameter $\mu$~\cite{DBLP:journals/dam/AnejaCN03}. Granot, McCormick, Queyranne, and Tardella explore other tractability conditions~\cite{DBLP:journals/mp/GranotMQT12}.

\subsection{Problem Statement}
We approach parametric graph templates from an algorithmic perspective. The goal is to solve classical graph problems for \emph{fixed parameters}, but in time that is strongly polynomial in the size of the parametric graph template. We focus on the classic problem of maximum $s$-$t$ flow, which has an interpretation in terms of data movement for program-derived graphs and operations research~\cite{10.5555/137406}.
%
For an execution graph, a maximum $s$-$t$ flow corresponds to a upper bound on the data movement between vertices $s$ and $t$ when they are placed on different processors. 
 
 \emph{An $s$-$t$ flow} $f$ assigns every edge $e$ a nonnegative real flow $f(e) \leq w(e)$. The sum $\sum_{e=(u, v)}f(e) - \sum_{e=(v, w)}f(e)$ is the \emph{net flow} of the vertex $v$. A flow has to have net flow $0$ for all vertices except $s$ and $t$. The value of the flow is the net flow of the source. A maximum flow is a flow of maximum value.

The maximum $s$-$t$ flow problem has a natural generalization to parametric graph templates when $s$ and $t$ are vertices in the root template: Instantiate the graph and compute a maximum flow between the only instance of $s$ and the only instance of $t$. 
There are multiple possibilities for how to interpret the case when $s$ and $t$ have multiple instances. One interpretation is as a multiple-source and multiple-target flow problem, where all instances of $s$ are treated as sources and all instances of $t$ as sinks. We call this a \emph{maximum all-$s$-$t$ flow}. 
Another interpretation considers the maximum flow between a fixed instance of $s$ and a fixed instance of $t$. We call this a \emph{maximum single-$s$-$t$ flow}. 

\subsection{Our Results}

We show how to efficiently represent a class of parallel loop programs as parametric graph templates and how properties of data movement in the parallel loop programs relates to cuts and flows in the parametric graph templates.

Then, we demonstrate that maximum $s$-$t$ flow  can be solved asymptotically faster than instantiating the parametric graph template. In particular, it is possible to obtain a runtime that is similar to the runtime on the template graph. 

For maximum all-$s$-$t$ flow, our algorithms match the runtime of a maximum $s$-$t$ algorithm such as Orlin's $O(mn)$ time algorithm~\cite{DBLP:conf/stoc/Orlin13}. We solve this problem using a technique called \emph{Edge Reweighting}. It observes that scaling the edge weights in the graph template solves the problem.
For maximum single-$s$-$t$ flow and minimum cuts, there is an overhead proportional to the height $h$ of the template tree. In addition to Edge Reweighting, we use a technique called \emph{Partial Instantiation}. We observe that a carefully chosen part of the instantiated graph can give sufficient information to extrapolate the result to the rest of the graph. How this part is chosen depends on the problem.

\section{Preliminaries} \label{sec:definitions}

We proceed to introduce definitions, notation, and assumptions that we use throughout this work. 

\miniheading{Template a vertex belongs to} If a vertex $v$ is in a template $T_i$ and $v$ is in no other template that is a descendant of $T_i$, then $v$ \emph{belongs to} $T_i$. We denote the unique template that $v$ belongs to by $T(v)$. 

\miniheading{Template an edge belongs to}
If both endpoints of an edge belong to a template $T_i$, then this edge belongs to template $T_i$. We denote the number of vertices and edges that belong to a template $T_i$ by $n_i$ and $m_i$, respectively.

\miniheading{Cross-template edges} An edge $(u, v)$ where $u$ and $v$ belong to different templates is \emph{cross-template}.

\miniheading{No Jumping} We assume there are no edges that `jump' layers in the template hierarchy. Specifically, if $(u,v)$ is a cross-template edge, then $T(u)$ is a parent or child of $T(v)$. This rule ensures that a path in the graph corresponds to a walk in the template tree. It comes without loss of generality for cut and flow problems, as an edge that jumps layers can be split into multiple edges (all of weight $\infty$ except the edge connected to the vertex that belongs to the deeper template in the template tree). For graphs that model programs, this assumption corresponds to disallowing jumps to arbitrary program locations.

\miniheading{Boundary Vertices}
Consider a vertex $u$ and $v$ where $T(v)$ is a parent of $T(u)$. If there is an edge from $u$ to $v$ or from $v$ to $u$ in the template graph, then $v$ is a \emph{boundary vertex} of $T(u)$.

\miniheading{Template graph of a template}
The subgraph of the template graph $G$ induced by a template $T_i$ is called the \emph{template graph of} $T_i$. 

\miniheading{Instance tree} We extend the nomenclature of templates to instances. The template hierarchy can be transferred onto the instances, where an instance $I$ is a descendant of an instance $I'$ if the template $T$ that instantiated $I$ is a descendant of the template $T'$ that instantiated $I'$. Similarly, we extend the notions of ancestor, parent, and child to the instances, creating an \emph{instance tree}. Two instances that have the same parent instance are \emph{siblings}.
If a vertex $v$ is contained in an instance $I$, but it is not contained in any other descendant of $I$, the vertex $v$ belongs to the instance $I$. 
If $b_i$ is an instance of a boundary vertex $b$ of a template $T$, then $b_i$ is a boundary vertex of the instance that $b_i$ belongs to. The instance of the root template is the \emph{root instance}. For a vertex $v$ in the instantiation, we write $T(v)$ for the template of the instance that $v$ belongs to.
 
\miniheading{Isomorphism} Two parametric graph templates $\mathcal{G}_1$ and $\mathcal{G}_2$ are \emph{isomorphic} if they instantiate isomorphic graphs. Two isomorphic $\mathcal{G}_1$ and $\mathcal{G}_2$ can have different parameters, templates, and their template graphs need not be isomorphic.

\miniheading{Cycles}
Acyclic graphs are easier to handle for many algorithmic problems. In parametric graph templates, we consider two different notions of what constitutes a cycle.
The simplest notion of cycles comes from considering cycles in the template graph. If it does not contain any cycles, then the instantiation does neither (and vice versa). 
%
A path $p_1, \dotsc p_k$ in the template graph that contains three vertices $p_i, p_j, p_k$ with $i<j<k$ and $T(i)=T(k)$ but $T(i) \neq T(j)$ is a template-cycle. We say a parametric graph template is \emph{template-acyclic} if it does not contain a template-cycle. This notion is incomparable to the notion of acyclic parametric graph templates. 
There are acyclic parametric graph templates that are not template-acyclic (consider a path whose nodes alternate between belonging to some template and its child). Note that a template-acyclic graph can have cycles (consider a cycle whose vertices belong to the same template).




\section{Templates of Parallel Loop Programs}\label{sec:loop-parallel-programs}

We show that a broad class of parallel programs can be modeled as parametric graph templates, such that the parametric graph template corresponds to the source code of the program and an instantiation of the parametric graph template corresponds to an execution of the program. This allows us to analyze properties of the execution of a program by considering a parametric graph template of a size comparable to the source of the program.  

The parametric graph templates we consider can model nested loop programs, for example Projective Nested Loops~\cite{DBLP:conf/spaa/DinhD20} and Simple Overlap Access Programs~\cite{DBLP:conf/spaa/KwasniewskiBGCS21}. The program receives a set of multi-dimensional input arrays $A_1, \dotsc, A_k$. The goal is to output a multi-dimensional array $B$. The program can use several multi-dimensional temporary arrays $C_1, \dotsc C_{k'}$. For any array $D$, its size in the $i$-th dimension is $\text{size}_i (D)$. 

Roughly speaking, we allow any composition of elementary operations and parallel nested loops where the loop bounds only depend on the sizes of the input arrays. We allow parallel reduction to aggregate the results of a loop. We do not allow data-dependent control flow, but we allow the locations of memory accesses to be data-dependent. Examples of algorithms that can be represented this way include matrix multiplication, convolution, and cross-correlation. See \Cref{fig:examples:mm} for a parametric graph template of matrix multiplication and \Cref{fig:examples:conv} for a parametric graph template of cross-correlation. 

We call the resulting parametric graph templates \emph{parallel loop graph templates}.
Next, we describe their syntax. Then, we describe a semantic for these parametric graph templates. Finally, we relate the data movement of the parallel loop programs with their templates' instantiations.

\begin{figure}[b!]
\begin{subfigure}[t]{0.535\linewidth}
\centering
\includegraphics[width=0.8\linewidth]{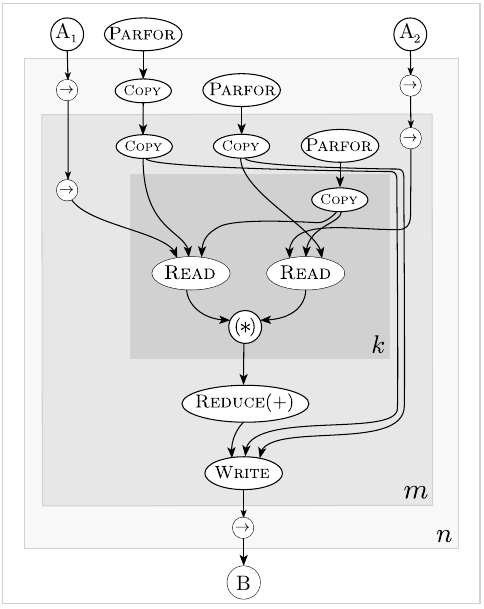}
\caption{Multiplication of an $n\times k$ matrix $A_1$ and a $k\times m$ matrix $A_2$. }
\label{fig:examples:mm}
\end{subfigure}
\hfill
\begin{subfigure}[t]{0.43\linewidth}
\centering
\includegraphics[width=0.8\linewidth]{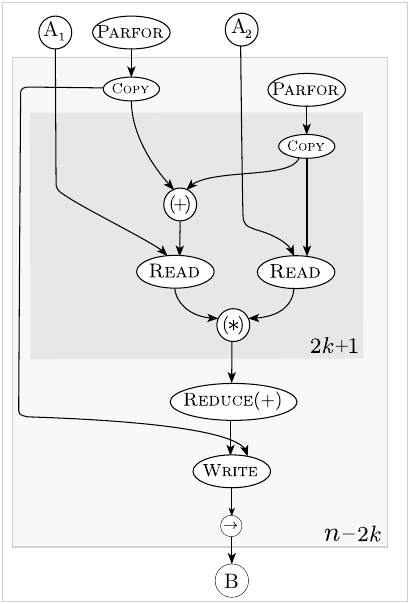}
\caption{1-D cross-correlation of a size $n$ array $A_1$ and a size $k$ array $A_2$. }
\label{fig:examples:conv}
\end{subfigure}
\caption{Example parametric graph templates for execution graphs. The edges are drawn in their input order left-to right.}
\label{fig:example:programs}
\end{figure}

\subsection{Syntax}
The vertices of the template graph are annotated with types corresponding to their function in the program. 
Each template graph contains the \emph{input memory vertices} $A_1, \dotsc,  A_k$, the \emph{output memory vertex} $B$, and the \emph{temporary memory vertices} $C_1, \dotsc, C_{k'}$. The memory vertices can have arbitrary in-degree and out-degree and belong to the root template.
Other vertices have out-degree $1$, except if stated otherwise. The outgoing edge is called the \emph{output edge}. To disambiguate the inputs to a vertex, the incoming edges are numbered consecutively. We refer to inputs in this \emph{input order}.
We consider the following control flow constructs. These are boundary vertices.

\miniheading{\textsc{Parfor}} A \textsc{Parfor} (parallel for loop) vertex has no input edge. Its output edge leads to a vertex in a child template.

\noindent
\textbf{\textsc{Reduce(\textsc{Op})}}, where \textsc{Op} is an associative and commutative operator. Has a single input edge from a vertex in a child template. The output edge leads to a non-memory vertex.

\miniheading{\textsc{Copy}} A \textsc{Copy} vertex $v$ has arbitrary outdegree. For each of its output edges $(v, u)$, the template $T(u)$ is not a parent of $T(v)$ and $u$ is not a memory vertex.

We consider the following memory constructs, which are boundary vertices.

\noindent
`$\mathbf{\rightarrow}$'. \ A (pass-through) $\rightarrow$ vertex has in-degree and out-degree $1$. At most one of the two neighbors can be a memory vertex.

\miniheading{\textsc{Read}} A Read vertex has a first input edge from a memory vertex or a $\rightarrow$ vertex and one or more other input edges from a non-memory vertex. Its output edge leads to a non-memory vertex.

\miniheading{\textsc{Write}} A Write vertex has two or more input edges from non-memory vertices. Its output edge leads to a memory or $\rightarrow$ vertex.

We consider the following types of operator vertices. They cannot be connected to memory vertices and are not boundary vertices.

\noindent
\textbf{$($\textsc{Op}$)$}, for \textsc{Op} $\in \{ +, -, *, \div \}$, which has in-degree $2$.

\noindent
{$\mathbf {[c]}$}, for any representable constant $c$.

\subsection{Semantics}


A \emph{well-formed} program has an acyclic template graph. For a well-formed program, a \emph{serial execution} is any topological order of the instantiation of the parametric graph template. Each $d$-dimensional input array $A_i$ initially contains some current value $A_i[j_1]....[j_d]$ at each position $(j_1, \dotsc, j_d)$, where the $d$-th dimension of $A_i$ has size $\text{size}_d(A_i)$. 
All other arrays contains $0$ at each of their positions. The arrays do not alias each other.

The semantics of a serial execution is given by applying the following rules to each vertex in the serial execution. Before evaluating the rules, contract all edges which have at least one $\rightarrow$ vertex neighbor (these exist to transfer values from inside the template hierarchy to the memory vertices in the root template).

\miniheading{\textsc{Parfor}} A \textsc{Parfor} vertex with $k$ children outputs one value of the permutation of $\{0, \dotsc, k-1\}$ to each child in an injective way.

\miniheading{\textsc{Copy}} Given input $x$, \textsc{Copy} outputs $x$ to all its children.

\noindent
\textbf{\textsc{Reduce(\textsc{Op})}}, for \textsc{Op} $\in \{ +, *\}$. Given the inputs $x_1, \dotsc, x_k$,  \textsc{Reduce(\textsc{Op})} outputs the result of applying \textsc{Op} repeatedly in an arbitrary order to the inputs.

\miniheading{\textsc{Read}} Given inputs $A_i, j_1, \dotsc, j_d$, if $A_i$ has dimension $d$ and for all $j_k\in \{j_1, \dotsc, j_d\}$ we have $0\leq j_k<\text{size}_k(A_i)$, a \textsc{Read} vertex outputs the current value of $A_i[j_1]....[j_d]$. Otherwise, the result of the serial execution is undefined.

\miniheading{\textsc{Write}} Given inputs $x, j_1, \dotsc, j_d$, a \textsc{Write} vertex outputs $x$. This has the side effect of updating the current value of the array into which the output edge leads: Say it leads to $A_i$. Then, if $A_i$ has dimension $d$ and for all $j_k\in \{j_1, \dotsc, j_d\}$ we have $0\leq j_k<\text{size}_k(A_i)$, the current value of $A_i[j_1]....[j_d]$ becomes $x$. Otherwise, the result of the serial execution is undefined.

\noindent
\textbf{$($\textsc{Op}$)$}, for \textsc{Op} $\in \{ +, -, *, \div \}$. Given $x,y$, outputs $x$ \textsc{Op} $y$.

\noindent
$\mathbf {[c]}$ outputs the constant $c$.

In a serial execution, we say that two reads or writes $u,v$ are \emph{totally ordered} if there is a path from $u$ to $v$ or from $v$ to $u$ in the instantiation. A \emph{data race} occurs if there are two writes $W_1, W_2$ with the same right input (i.e., index) that connect to the same memory vertex and $W_1$ and $W_2$ are not totally ordered. The output of a well-formed program is \emph{well-defined} if none of its serial executions has an undefined result or contains a data race. 

\subsection{Applications of Flows and Cuts}

To model dataflow in the parallel loop programs, we set the weight of the \textsc{Parfor} edges to $0$ and the weight of the other edges to $1$. Loop indices can be recomputed and thus do not cause data movement.
The parallel loop graph template encodes all the data movement in its edges. However, it cannot resolve the aliasing of array locations. Hence, the weight of the edges going across a partition of the vertices provides an \emph{upper}bound on the data movement:
\begin{observation}
Consider a partition $(V_0, , \dotsc, V_p)$ of the vertices in an instantiation of a parallel loop graph template. The value total weight of the edges with endpoints in different partitions is an upper bound on the data movement incurred when the partitions are allocated to distinct processors.
\end{observation}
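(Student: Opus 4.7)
The plan is to exhibit a concrete execution strategy whose communication cost is at most the cut weight, thereby showing the cut weight upper bounds the minimum data movement achievable.

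First, I would fix a serial execution order (a topological order of the instantiation, which exists because the program is well-formed) and replay it across the $p$ processors according to the partition: each vertex $v$ is evaluated on the unique processor holding the block $V_i$ with $v \in V_i$. Because every non-memory vertex has out-degree $1$ and its semantics is a pure function of the values arriving along its incoming edges (up to the side-effect rules on \textsc{Write}, which only affect local memory state), evaluating $v$ on its home processor requires exactly the values flowing along its in-edges.

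Next I would account for the communication along each in-edge. If an incoming edge $(u,v)$ has both endpoints in the same block, no communication is needed. If it crosses the partition, then two cases arise: (i) if it is a \textsc{Parfor} edge, the value transmitted is a loop index which is a deterministic function of the enclosing \textsc{Parfor} hierarchy and the instance's position, so the receiving processor can recompute it locally, matching the assigned weight $0$; (ii) otherwise the edge has weight $1$, and sending the single value produced at $u$ to the processor of $v$ costs one unit of communication, matching the assigned weight. Summing over all cross-partition edges yields a total communication cost of at most the cut weight.

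Finally I would note that the reason the statement gives an upper bound rather than an equality is precisely the aliasing phenomenon mentioned in the surrounding text: multiple \textsc{Read}/\textsc{Write} vertices may refer to the same memory cell, so a smarter strategy could reuse a transferred value for several cut edges, but this can only decrease the cost. The only subtle point to verify is that the recomputation argument for \textsc{Parfor} edges is consistent across processors; this follows because the \textsc{Parfor} semantics fixes an injective assignment of indices to children that depends only on the vertex's position in the instance tree, which is common knowledge to all processors from the parametric graph template itself. No step requires more than a direct appeal to the semantic rules of \Cref{sec:loop-parallel-programs}, so I do not anticipate a genuine obstacle beyond carefully stating the execution strategy.
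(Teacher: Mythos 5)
The paper gives no formal proof of this observation—only the surrounding remarks that loop indices can be recomputed (hence \textsc{Parfor} edges get weight $0$), that every other edge carries exactly one value, and that unresolved aliasing is why the bound is only an upper bound. Your proposal is a correct fleshing-out of exactly that reasoning, so it is essentially the same approach as the paper's.
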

Note that in our formulation of parallel loop graph templates, vertices corresponding to arrays are placed on a single processor. Thus, to model the distribution of an array across multiple processors, a vertex must be created for each processor that holds its subarray (this subarray can be discontiguous though).

Since a maximum $s$-$t$ flow equals the value of a minimum $s$-$t$ cut, the maxflow provides a partition of the loop program with small data movement:
\begin{observation}
If a parallel loop graph template has a maximum all-$s$-$t$ flow of value $x$, then there is a partition of the parallel loop program which incurs at most $x$ data movement and in which all instances of $s$ are executed on a different processor as all instances of $t$.
\end{observation}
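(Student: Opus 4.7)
The plan is to derive the statement from the classical max-flow min-cut duality together with the preceding observation that cross-partition edge weight upper bounds data movement. All the work happens on the instantiated graph; the parametric structure plays no role in the argument itself.

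First, I would reduce maximum all-$s$-$t$ flow to a standard single-source, single-sink flow. On the instantiated graph, introduce an auxiliary super-source $s^\star$ with edges of capacity $+\infty$ to every instance of $s$, and a super-sink $t^\star$ with edges of capacity $+\infty$ from every instance of $t$. By construction, the value of a maximum $s^\star$-$t^\star$ flow in this augmented graph equals the value $x$ of the maximum all-$s$-$t$ flow in the instantiation. The classical max-flow min-cut theorem then yields a minimum $s^\star$-$t^\star$ cut of weight $x$, inducing a bipartition $(U, \overline{U})$ of the augmented vertex set with $s^\star \in U$ and $t^\star \in \overline{U}$. Since the $+\infty$-weight edges incident to $s^\star$ and $t^\star$ cannot be cut, every instance of $s$ lies in $U$ and every instance of $t$ lies in $\overline{U}$.

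Next, I would transfer this bipartition back to the original instantiated graph by discarding $s^\star$ and $t^\star$: let $V_0 := U \setminus \{s^\star\}$ and $V_1 := \overline{U} \setminus \{t^\star\}$. Because the removed vertices are incident only to the uncuttable infinite-capacity edges, the total weight of edges of the instantiation with endpoints in different classes of $(V_0, V_1)$ equals exactly $x$. By the preceding observation, assigning the vertices in $V_0$ to one processor and those in $V_1$ to another yields a partition of the parallel loop program whose incurred data movement is at most $x$; and by construction all instances of $s$ reside on the first processor while all instances of $t$ reside on the second, which is exactly the conclusion claimed.

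There is essentially no hard step here: the argument is a direct application of max-flow min-cut composed with the previous observation. The only bookkeeping to be careful about is verifying that the auxiliary vertices $s^\star, t^\star$ can be cleanly removed from the partition without changing its cut weight, which follows immediately from the infinite capacity of the added edges. I would also remark briefly that the same proof works for the single-$s$-$t$ variant when $s$ and $t$ have a unique instance, since that case is subsumed.
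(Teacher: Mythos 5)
Your proposal is correct and follows exactly the route the paper intends: the paper states this observation without a separate proof, justifying it only by the remark that a maximum all-$s$-$t$ flow equals a minimum all-$s$-$t$ cut, and your super-source/super-sink reduction plus the preceding observation on cross-partition edge weights is the standard way to fill in that argument. No gaps.
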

We can get a similar statement for maximum single-$s$-$t$ flows.

\section{Template Maximum Flows} \label{sec:stflow}


Next, we turn to the first algorithmic question on parametric graph templates. Our goal here is to solve the maximum $s$-$t$ flows problem on a parametric graph template \emph{without explicitly instantiating it}. Instead, the goal is to get a runtime that is polynomial in the size of the graph template. Our algorithms use a series of observations on the structure of maximum flows in parametric graphs which allow us to produce transformed parametric graph templates, on which the answer can be efficiently computed.


We will approach the problem by considering the case where $s$ and $t$ are in the root template first. Then, we show how to reduce both the maximum all-$s$-$t$ flow and the maximum single-$s$-$t$ flow problem to an instance of this simpler problem. Throughout, we assume that all vertices are reachable from $s$ and can reach $t$, as otherwise they cannot carry flow. 

In the template-acyclic case, the maximum single-$s$-$t$ flow is trivially zero \emph{except} when $s$ and $t$ are in the same instance of the least common ancestor of $T(s)$ and $T(t)$ in the template tree. Therefore, in the acyclic case it makes sense to restrict our attention to this case where the flow is not trivially zero. In the case where there are template-cycles, it matters which instances of $s$ and $t$ are picked. These can be identified by numbering the instances they belong to.
 


\subsection{Edge-Reweighting}\label{sec:tech:edge-reweight}

\begin{figure}
\vspace{-2em}
\centering
\includegraphics[width=0.7\linewidth]{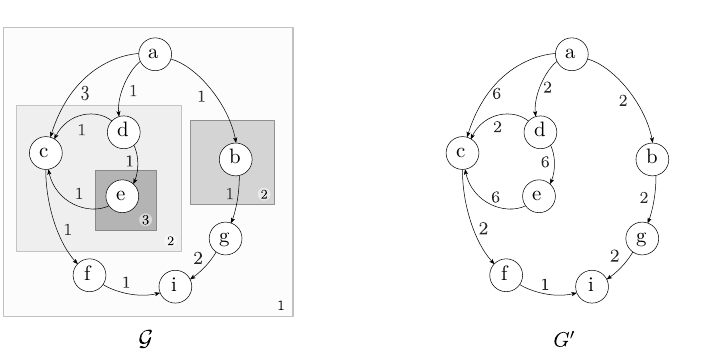}
\vspace{-1em}
\caption{Edge Reweighting turns a parametric graph template $\mathcal{G}$ into a graph $G'$ by scaling the weights of each edge in its template graph by all the parameters of the templates that contain an endpoint of the edge. }\label{fig:tech:reweight}
\end{figure}

An efficient way to solve a problem on parametric graph templates is to show how it relates to a problem on the template graph with \emph{scaled weights}. The idea is that an edge that intersects template $T_i$ can be used $P_i$ times and can therefore be used to carry $P_i$ times the amount of flow. We will see that this observation holds as long as $s$ and $t$ are in the root template or if we consider the maximum all-$s$-$t$ flow problem.
 We call this approach \emph{Edge-Reweighting}. 
See \Cref{fig:tech:reweight} for an example of Edge Reweighting. 

\miniheading{Algorithm: Edge-Reweighting}
Transform the parametric graph template $\mathcal{G} =(G,\mathcal{T}, \mathcal{P})$ with edge weights $w$  
into a graph $G'$ with edge weights $w'$. The \emph{reweighted graph} $G'$ has the same vertex and edge set as the template graph $G$, but the weights are scaled as follows:
Multiply the weight of an edge in the template graph by the product of the parameters of the templates that contain at least one endpoint of the edge. That is, let $I(e)$ be the index set of all templates that contain at least one of the endpoints of $e$. Then, the weight of $w'(e)$ is $w(e) \prod_{i\in I(e)} P_i$. To implement this in linear time $O(m)$, precompute in a pre-order traversal of the template tree for each template the product of all the ancestors' parameters.

\subsection{Source \& Sink Belong to the Root Template}\label{sec:maxflow:root}

Our goal is to show that when the source $s$ and sink $t$ belong to the root template, then a maximum $s$-$t$ flow in the reweighted graph equals the value of a maximum all-$s$-$t$ flow.
If $s$ and $t$ belong to the root template (which is instantiated once), then a maximum single-$s$-$t$ flow equals a maximum all-$s$-$t$ flow and we call it a maximum $s$-$t$ flow for short.

The linear programming \emph{dual} of a maximum $s$-$t$ flow is a \emph{minimum $s$-$t$ cut}~\cite{DBF55}. We will use strong duality~\cite{chvatal1983linear} in our proof, which means that it suffices to identify an $s$-$t$ flow and a minimum $s$-$t$ cut of equal value to prove that they are optimal. We argue that Edge 
	Reweighting preserves the value of the dual minimum all-$s$-$t$ cut. Hence, it also preserves  the maximum all-$s$-$t$ flow value. 

The following shows us how to construct an $s$-$t$ cut $C'$ in the transformed graph $G'$ from an $s$-$t$ cut $C$ in $\mathcal{G}$ of the same value. Together with the other (easier) direction of the proof, this shows that the transformed graph $G'$ has the same maximum $s$-$t$ flow.

\begin{lemma}\label{lem:stcut-structure}
In a parametric graph template $\mathcal{G}=(G,\mathcal{T}, \mathcal{P})$, if $s$ and $t$ are in the root template, there is a minimum $s$-$t$ cut of the instantiation of $\mathcal{G}$ where every instance of every vertex is on the same side of the cut.
\end{lemma}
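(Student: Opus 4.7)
The plan is to start from any minimum $s$-$t$ cut $C$ of the instantiation and symmetrize it by a bottom-up sweep over the template tree until every vertex's instances all lie on the same side. I process templates from deepest to shallowest. When visiting a template $T_i$ with parent $T_j$, I iterate over every instance $J$ of $T_j$ in the instance tree. Inside $J$ there are exactly $P_i$ sibling instances $I_1, \ldots, I_{P_i}$ of $T_i$; these are pairwise structurally isomorphic and share identical boundary vertices, which live in $J$ but outside any sibling. I select the sibling $I_{\ell^\star}$ whose internal coloring contributes least to $C$, and I copy that coloring, transported along the natural isomorphism between siblings, onto every other sibling.

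The accounting step is that the value of $C$ decomposes as the weight of edges disjoint from the descendants of $I_1\cup\cdots\cup I_{P_i}$, plus $\sum_{\ell=1}^{P_i}$ an independent contribution of $I_\ell$ consisting of the cut edges internal to $I_\ell$'s territory together with the cut edges from $I_\ell$'s territory to the shared boundary in $J$. The \emph{No Jumping} assumption guarantees that an edge with one endpoint in the descendants of $I_\ell$ either stays in that territory or has its other endpoint at a boundary vertex belonging to $T_j$, so no ``cross-sibling'' edge contributes to two different $I_\ell$ at once. Since the boundary coloring is untouched by the copy operation, each term depends only on the chosen internal coloring, and copying the minimum to all siblings cannot increase the total; by minimality of $C$ it stays exactly the same. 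The vertices $s$ and $t$ belong to the root template and thus have a unique instance that lies outside every proper sibling, so the modified assignment is still a valid $s$-$t$ cut.

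After the sweep, within every parent instance all sibling instances of any template share the same internal coloring. A top-down induction on the template tree then finishes the argument: the root has only one instance and is trivially uniform; assuming all instances of $T_j$ carry the same coloring, for any child $T_i$ of $T_j$ the $T_i$-children of different $T_j$-instances are identical because their parents are, and within each $T_j$-instance they were unified by the sweep, so every instance of every vertex of $T_i$ (or deeper) lies on the same side. The main obstacle is formalizing the sibling decomposition of the cut weight precisely; it rests entirely on \emph{No Jumping} and the fact that sibling instances share vertices only through their common parent's boundary. A secondary worry, that processing a shallower template later might destroy uniformity already established deeper, is dispelled by the observation that copying a whole sibling's coloring wholesale also transports whatever internal uniformity that sibling had to every other sibling.
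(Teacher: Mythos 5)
Your proof is correct, but it takes a genuinely different route from the paper's. The paper argues by induction on the number of templates: it merges the boundary vertices of a child template $T_i$ that lie on the $s$-side (resp.\ $t$-side) into single vertices $s'$, $t'$, applies the induction hypothesis to a smaller parametric graph template built from $T_i$, and then certifies that the resulting ``uniform'' cut is minimum on the subgraph spanned by all $P_i$ instances via LP strong duality --- it explicitly constructs a flow of value $P_i\cdot\mu$ matching a cut of value $P_i\cdot\mu$. You instead give a direct exchange argument on a fixed minimum cut of the full instantiation: because the \emph{No Jumping} rule and the absence of cross-sibling edges make the cut contribution of each sibling's territory an independent function of its internal coloring (with the shared boundary coloring held fixed), copying the cheapest sibling's coloring to all siblings cannot increase the value, and minimality forces equality. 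Your approach is more elementary (no duality, no flow construction) and it localizes the modification cleanly, which also sidesteps the paper's somewhat implicit step of recombining the per-child cuts into a global minimum cut. What the paper's duality-based argument buys in exchange is robustness: it is exactly the version that survives the extension to sibling edges (\Cref{lem:stcut-structure-sibling}), where your independence decomposition breaks down because an edge joining instance $I_\ell$ to instance $I_{f_e(\ell)}$ charges two siblings at once; there the conservation-of-flow argument under bijective sibling functions still goes through, while the ``pick the cheapest sibling'' exchange does not. One point you should make explicit if you write this up: the per-sibling contribution functions $W_\ell$ must be compared \emph{as functions}, i.e.\ you need that the sibling isomorphisms fix the shared boundary instances pointwise and preserve edge weights and directions, so that $W_\ell(\chi)=W_{\ell'}(\chi)$ for every transported coloring $\chi$ --- this is what licenses replacing $\sum_\ell W_\ell(\chi_\ell)$ by $P_i\cdot\min_\ell W_\ell(\chi_\ell)$.
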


\begin{proof}
The proof is by induction on the number of templates in the parametric graph template. If the parametric graph template has only a single template, then (since $s$ and $t$ must be in this template) the claim is trivial because the root is repeated only once, by assumption.

Otherwise, let $C=(V_s, V_t)$ be a minimum all-$s$-$t$ cut of the parametric graph template $\mathcal{G}$ (i.e., $V_s$ contains the vertices assigned to $s$ and $V_t$ those assigned to $t$).
Consider an arbitrary template $T_i$ that is a \emph{child of the root template} and its graph template $G_i$.
The sets $B_s$ and $B_t$ contain the boundary vertices of $T_i$ that are in $V_s$ and $V_t$, respectively. 

If either of the sets $B_s$ or $B_t$ is empty, then it follows immediately that all instances of the vertices that are in $T_i$ are in the same part of the cut $C$ (i.e., on the side of $s$ if the set $B_t$ is empty and vice versa). 

Otherwise, merging all vertices in $B_s$ into a vertex $s'$ and merging all vertices in $B_t$ into a vertex $t'$ does not change the value of the minimum $s$-$t$ cut in $\mathcal{G}$. Moreover, if the merged parametric graph template has a minimum $s$-$t$ cut that puts every instance of every vertex on the same side of the cut, then so does the original parametric graph template (because $B_s$ and $B_t$ contain only vertices that belong to the root template and we can ``undo'' the merging). We thus further assume w.l.o.g. that $B_s$ and $B_t$ contain a single vertex named $s'$ and $t'$, respectively. Note that since these vertices belong to the root template, the vertices $s'$ and $t'$ coincide with their only instances.

Every minimum $s$-$t$ cut must separate $s'$ from $t'$ in the subgraph $H$ given by the instances of $T_i$ and the vertices $s'$ and $t'$ (but without a potential edge from $s'$ and $t'$). We use our induction hypothesis to show there is a minimum $s'$-$t'$ cut in this subgraph that puts all instances of a vertex on the same side of the cut. 

We construct a parametric graph template $\mathcal{G}''$ such that a maximum $s'$-$t'$ flow in $\mathcal{G}''$ can be extended to a flow for the graph $H$. The parametric graph template $\mathcal{G}''$ has the following template graph: take the subgraph of $G$ induced by $T_i$ together with its boundary vertices, then delete any edges going between $s'$ and $t'$. The boundary vertices $s'$ and $t'$ and all vertices that belong to $T_i$ are put into the root template of $\mathcal{G}''$ (which has parameter $1$). Moreover, $\mathcal{G}''$ has the templates and parameters of the descendants of $T_i$ in $G$. The parametric graph template $\mathcal{G}''$ contains at least one template less than $G$. Hence, by induction, there is a minimum $s'$-$t'$ cut $C''$ of $\mathcal{G}''$ that puts all instances of the same vertex into the same partition. 

Let $f''$ be the dual maximum $s'$-$t'$ flow corresponding to $C''$ in $\mathcal{G}''$ of value $\mu$. Now, we construct a $s'$-$t'$ flow $f$ in $H$ and show it is maximum. Along each instance of each edge $e$ that intersects $T_i$ we send $f''(e)$ flow. The capacity constraint on the flow is trivially satisfied. The conservation constraint on the flow is satisfied because in the instantiated graph, the total flow going in and out of an instance of $v$ is the same as for vertex $v$ for $f''$ in $\mathcal{G}''$. The value of the flow $f$ is $P_i \cdot \mu$.
Now, consider the cut $C'$ where we put every instance of a vertex $v$ in $T_i$ on the same side as $v$ is in $C''$. The value of this cut is $P_i \cdot \mu$. By strong duality, this shows that $C'$ is a minimum $s'$-$t'$ cut in the graph $H$. By construction, this cut puts every instance of every vertex on the same side of the cut.
We conclude that all children of the root template can be cut such that every instance of the same vertex is in the same part of the cut. Because the root has a single instance, the statement follows for the root as well.
\end{proof}

\begin{lemma}\label{lem:stcut-reweighting}
	If a parametric graph template $\mathcal{G}$ has a minimum $s$-$t$ cut of value $\mu$ and $s$ and $t$ are in the root template, then the graph $G'$ constructed by edge reweighting has a minimum $s$-$t$ cut of value $\mu$.
\end{lemma}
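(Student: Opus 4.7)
The plan is to deduce this lemma directly from \Cref{lem:stcut-structure}, by observing that a cut which places every instance of every vertex on the same side corresponds naturally to a cut of $G'$ of equal value, and conversely. The crucial bookkeeping is that the number of times an edge $e$ of the template graph appears in the instantiation is exactly $\prod_{i \in I(e)} P_i$, which is precisely the scaling factor used in the definition of $w'(e)$.

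First I would invoke \Cref{lem:stcut-structure} to obtain a minimum $s$-$t$ cut $C=(V_s,V_t)$ of the instantiation, of value $\mu$, in which every instance of every vertex lies on the same side. Define a cut $C'=(V'_s,V'_t)$ of $G'$ by placing a template-graph vertex $v$ into $V'_s$ if all instances of $v$ lie in $V_s$, and into $V'_t$ otherwise. I would then verify by a short induction on the template-tree depth (or by unrolling the instantiation procedure) that for any edge $e$ of the template graph, the number of copies of $e$ produced in the instantiation equals $\prod_{i\in I(e)} P_i$, where $I(e)$ is the index set of templates containing at least one endpoint of $e$; moreover, each copy of $e$ connects an instance of its tail with an instance of its head. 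Consequently, if $e$ crosses the projected cut $C'$, then all of its $\prod_{i\in I(e)} P_i$ instances cross the original cut $C$, contributing exactly $w(e)\prod_{i\in I(e)}P_i=w'(e)$ to the value of $C$. Summing over crossing edges gives $\mathrm{val}(C')=\mathrm{val}(C)=\mu$, so $G'$ admits an $s$-$t$ cut of value $\mu$.

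For the reverse inequality, I would take any $s$-$t$ cut $D'=(U'_s,U'_t)$ of $G'$ and lift it to a cut $D=(U_s,U_t)$ of the instantiation by placing every instance of $v$ on the side matching $v$ in $D'$; because $s$ and $t$ belong to the root template and therefore coincide with their unique instances, this is a valid $s$-$t$ cut. By the same edge-counting argument in reverse, $\mathrm{val}(D)=\mathrm{val}(D')$, which shows that every cut of $G'$ of value $c$ yields a cut of the instantiation of the same value $c$. Hence $G'$ cannot have a minimum $s$-$t$ cut strictly smaller than $\mu$, and combined with the first direction the minimum $s$-$t$ cut of $G'$ equals $\mu$.

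The only nontrivial step is the edge-instance count, which is really a structural claim about the rewriting procedure defining an instantiation; I would prove it by induction on the number of templates, using that processing a leaf template $T_i$ multiplies the number of copies of every edge with an endpoint in $T_i$ by exactly $P_i$. Everything else is a direct application of \Cref{lem:stcut-structure} and the definition of edge reweighting.
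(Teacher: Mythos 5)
Your proposal is correct and follows essentially the same route as the paper's proof: the lower bound comes from lifting an arbitrary cut of $G'$ to the instantiation, and the upper bound comes from applying \Cref{lem:stcut-structure} and projecting the resulting instance-uniform minimum cut onto the template graph, with the edge-instance count $\prod_{i\in I(e)} P_i$ matching the reweighting factor in both directions. The only difference is that you spell out an induction for the edge-instance count, which the paper simply asserts.
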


\begin{proof}
Any cut in the reweighted graph $G'$ corresponds to a cut of the same value in the parametric graph template $\mathcal{G}=(G,\mathcal{T}, \mathcal{P})$: Put every instance of a vertex into the partition that it has in the cut of $G'$. Since every edge $e$ is cut exactly $\prod_{i\in I(e)} P_i$ times, this shows that the value of the minimum $s$-$t$ cut of the graph $G'$ is at least the value of the minimum all-$s$-$t$ cut of the parametric graph template $(G,\mathcal{T}, \mathcal{P})$.
It remains to show that the minimum $s$-$t$ cut of the reweighted graph $G'$ is at most the value of the minimum $s$-$t$ cut of the parametric graph template $\mathcal{G}$.
By \Cref{lem:stcut-structure}, there is a minimum $s$-$t$ cut $C$ of $\mathcal{G}$ that puts every instance of every vertex on the same side of the cut. Now, we construct a cut $C'$ of $G'$ from this cut $C$ by putting every vertex $v$ in $G'$ on the same side of the cut as all the instances of $v$ are in $C$. 
The cut $C'$ has the same value $\mu$ because every instance of an edge $e$ in $\mathcal{G}$ is crossing $\prod_{i\in I(e)} P_i$ times, which is the amount by which we scaled the weight of edge $e$ in $G'$. 
\end{proof}

%


\subsection{Instance Merging}\label{sec:tech:instance-merging}

We show how to merge all instances of a vertex $v$ in a parametric graph template by transforming it into parametric graph template of almost the same size (the overhead is an additive $O(nh)$). We will use this technique to reduce the general case for maximum all-$s$-$t$ flow to the case where $s$ and $t$ are in the root.

The idea is that merging all instances of a vertex $s$ is akin to moving the vertex from the template $T(s)$ it belongs to into the root template (so that it belongs to the root template). The \emph{no jumping rule} only allows edges to go from parent templates to children templates (or vice versa), we need to introduce \emph{dummy edges} and \emph{dummy vertices} along the way. The dummy edges have $\infty$ weight. An original edge $(u, s)$ will be transformed into a path $u, d_1, \dotsc, d_k, s$ for dummies $d_1, \dotsc, d_k$ (symmetrically for an edge $(s, u)$). See \Cref{fig:tech:im} for an example.

\begin{figure}
\begin{center}
	\includegraphics[width=0.5\linewidth]{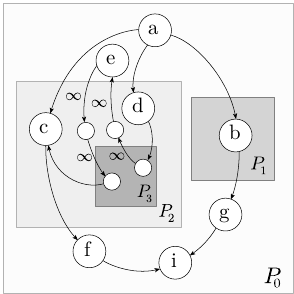}
\end{center}
\caption{Instance Merging on the graph $\mathcal{G}$ from \Cref{fig:intro-example:template} with vertex $e$ pushes the vertex $e$ is into the root template. This introduces dummy nodes (drawn without labels) and $\infty$-weight dummy edges. After contracting all dummy edges, the instantiation of the transformed parametric graph template is isomorphic to the graph we get by merging all instances of $e$ in the instantiation of $\mathcal{G}$.}\label{fig:tech:im}
\end{figure}

\miniheading{Algorithm: Instance-Merging} Given a parametric graph template $\mathcal{G}$ and a vertex $s$, repeat the following until $s$ is in the root template:
\begin{enumerate}
	\item For any cross-template edge $(u, s)$, introduce a dummy vertex $d$ in the template $T(s)$ that $s$ belongs to. Replace the edge $(u, s)$ by two edges $e_1=(u, d)$ and $e_2=(d, s)$. The weight of the edge $e_1$ is the same as the weight of the edge $e$, but the weight of the edge $e_2$ is set to $\infty$. Proceed symmetrically for any cross-template edge $(s, u)$.
	\item Move the vertex $s$ from the template $T(s)$ to the parent of the template $T(s)$ (i.e., remove $s$ from the set $T(s)$). 
\end{enumerate}

\begin{observation}\label{lem:tech:instance-merge}
	Instance Merging($\mathcal{G}$, $s$) produces a parametric graph template $\mathcal{G'}$ whose instantiation is, after merging all instances of dummy edges of weight $\infty$, isomorphic to the graph that we get by instantiating the original parametric graph template $\mathcal{G}$ and merging all instances of $s$. Instance Merging($\mathcal{G}$, $s$) adds at most $d(s) \cdot h$ vertices and edges, where $d(s)$ is the degree of the vertex $s$ in the template graph. 
\end{observation}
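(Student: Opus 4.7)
The plan is to prove both assertions by induction on the number of iterations of the outer loop, which equals the depth of $s$ in the template tree (and is thus at most $h$). The base case of the induction is that $s$ already belongs to the root template; then the algorithm makes no changes and $s$ has a unique instance, which is trivially the result of merging all its instances. For the inductive step, I would let $\mathcal{G}'$ denote the parametric graph template after one iteration when $s$ belongs to template $T$ with parent $T'$, and prove the following intermediate claim: the instantiation of $\mathcal{G}'$, after contracting every dummy $\infty$-weight edge introduced in that iteration, is isomorphic to the instantiation of $\mathcal{G}$ in which all \emph{sibling} instances of $s$ (those sharing a parent instance of $T'$) have been merged. Iterating this claim then gives the full observation, since composing sibling-merging at every level up to the root amounts to merging all instances of $s$.

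To establish the single-iteration claim, I would case-split the cross-template edges incident to $s$ using the no-jumping rule: the other endpoint $u$ either lies in $T'$ or in a child of $T$. For an edge $(u,s)$ with $u \in T'$, step (1) introduces a dummy $d$ inside $T$, and after step (2) the edges $(u,d)$ and $(d,s)$ are both valid cross-template edges. In the instantiation of $\mathcal{G}'$, each instance of $T'$ now contains $P_T$ instances of $d$, each connected by an $\infty$-weight edge to the unique $s$-instance in that $T'$-instance; contracting these identifies all such $d$-instances with the $s$-instance, yielding $P_T$ parallel instances of $(u,s)$ per instance of $T'$. This matches exactly what we obtain when the $P_T$ sibling copies of $s$ in the original instantiation are merged. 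The case $u$ in a child of $T$ is symmetric. Edges internal to $T$ and edges not incident to $s$ are unaffected, since step (1) touches only cross-template edges incident to $s$.

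The main obstacle is the bookkeeping: verifying that the isomorphism constructed after contraction respects the multiplicities introduced by all prior iterations, so that dummy vertices created in one iteration correctly behave as ``forwarders'' (and themselves get further split) in subsequent ones. Once the single-iteration claim is in hand, one inductive composition closes the correctness argument.

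For the size bound, I would observe that the number of iterations is the depth of $s$ in the template tree, hence at most $h$. In each iteration, step (1) replaces each cross-template edge incident to $s$ by a two-edge path, contributing one new vertex and one new edge per replaced edge. The degree of $s$ in the template graph is invariant under the transformation, because replacing $(u,s)$ by $(u,d)$ and $(d,s)$ leaves $s$ with exactly one incident edge in place of the original. Therefore each iteration adds at most $d(s)$ vertices and $d(s)$ edges, and summing over at most $h$ iterations yields the bound of $d(s)\cdot h$ added vertices and $d(s)\cdot h$ added edges.
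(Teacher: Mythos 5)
Your proof is correct, and on the size bound it is actually more complete than the paper, which asserts the $d(s)\cdot h$ count without argument; your observation that the degree of $s$ is invariant across iterations (each incident edge is replaced by exactly one new edge still incident to $s$) is the right way to get at most $d(s)$ new dummies and net new edges per iteration, over at most $h$ iterations. For the isomorphism claim you take a genuinely different route. The paper argues globally on the \emph{final} template graph: after all iterations every original edge $(u,s)$ has become a path $u,d_1,\dotsc,d_k,s$ whose edges all have weight $\infty$ except the first, these are the only $\infty$-weight edges in $\mathcal{G}'$, and in the instantiation these $\infty$-weight paths fan out from $s$ to (a dummy adjacent to) every instance of every original neighbor of $s$, so contracting them all at once yields the merged graph directly. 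You instead induct on the depth of $s$ with a per-level invariant --- contracting the dummies introduced in one iteration merges the sibling instances of $s$ within each parent instance --- and compose level by level. Your version makes the mechanics explicit at the cost of exactly the bookkeeping you flag (dummies from earlier iterations being re-split later), which the paper's one-shot contraction sidesteps entirely. One imprecision to repair in your case analysis: edges $(u,s)$ with $u$ also belonging to $T(s)$ are untouched by step (1) but are not ``unaffected'' --- after step (2) they become cross-template, so in the instantiation the $P_{T(s)}$ instances of $u$ per parent instance all attach to the single remaining instance of $s$ rather than to their respective siblings $s_i$. This is precisely the sibling-merge your invariant asserts, so the conclusion stands, but the case must be treated rather than dismissed.
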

\begin{proof}
	In the template graph of $\mathcal{G'}$, there is a path consisting of $\infty$-weight edges from $s$ to every neighbor of $s$ in the template graph $G$ of $\mathcal{G}$. There are no other $\infty$ weight edges. Hence, there also is such an $\infty$-weight path in the instantiation of $\mathcal{G'}$ to every instance of every vertex that is a neighbor of $s$ in $G$. Contracting these paths gives a graph where the vertex $s$ has an edge to all instances of neighbors of $s$ in $G$, which is the same graph that we get by instantiating the original parametric graph template $\mathcal{G}$ and merging all instances of $s$. 
\end{proof}

\subsection{Maximum All-$s$-$t$ Flow}\label{sec:allstflow}

To solve maximum all $s$-$t$ Flow, all we would need to do is use Instance Merging on $s$ and then on $t$ to ensure that they are both in the root template. Then, we could use the edge reweighing \Cref{lem:stcut-reweighting}. This approach would cost  $O( n m + n^2 h)$ time. We can avoid this overhead by observing that edge reweighting works directly for maximum all-$s$-$t$ Flow (even when $s$ and $t$ are not in the root template).

\begin{lemma}\label{thm:edge-reweighting}
Edge reweighting of a parametric graph template $\mathcal{G}$ produces a re-weighted graph $G'$ where the value of the maximum $s$-$t$ flow of $G'$ equals the value of the maximum all-$s$-$t$ flow of $\mathcal{G}$.
\end{lemma}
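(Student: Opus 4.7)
The plan is to reduce to \Cref{lem:stcut-reweighting} via Instance Merging, so that proving this lemma becomes a short bookkeeping exercise. I will apply Instance Merging to $\mathcal{G}$ first with vertex $s$ and then with vertex $t$, obtaining a parametric graph template $\mathcal{G}_2$ in which both $s$ and $t$ belong to the root template. By two applications of \Cref{lem:tech:instance-merge}, after contracting all $\infty$-weight dummy edges, the instantiation of $\mathcal{G}_2$ is isomorphic to the instantiation of $\mathcal{G}$ with all instances of $s$ identified into one vertex and all instances of $t$ identified into one vertex. Since replacing a set of sources by a super-source and a set of sinks by a super-sink turns a maximum all-$s$-$t$ flow into a maximum $s$-$t$ flow of equal value, the max all-$s$-$t$ flow in $\mathcal{G}$ equals the max $s$-$t$ flow in the instantiation of $\mathcal{G}_2$. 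Since $s$ and $t$ now lie in the root template of $\mathcal{G}_2$, \Cref{lem:stcut-reweighting} equates this with the max $s$-$t$ flow in the reweighted graph of $\mathcal{G}_2$.

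What remains is to show that the max $s$-$t$ flow in the reweighted $\mathcal{G}_2$ equals the max $s$-$t$ flow in $G'$ (the reweighted $\mathcal{G}$). For each edge $e = (u, s)$ of $\mathcal{G}$ incident to $s$, Instance Merging replaces it in $\mathcal{G}_2$ by a chain $u, d_1, \dotsc, d_k, s$ whose first edge $(u, d_1)$ inherits the original weight $w(e)$ while every subsequent edge carries weight $\infty$. After edge reweighting, the $\infty$-weight edges remain $\infty$ and can be contracted without changing the max flow, collapsing each chain to a single edge from $u$ to $s$ whose capacity equals the reweighted weight of $(u, d_1)$. Edges not incident to $s$ or $t$ retain exactly their $G'$-weights.

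The main obstacle, and the calculation that requires care, is to verify that the reweighting factor of $(u, d_1)$ in $\mathcal{G}_2$ matches the reweighting factor of $e$ in $G'$. I will carry out a short case analysis on the relative position of $T(u)$ and $T(s)$ in the template tree: equal templates, $T(u)$ a child of $T(s)$, or $T(u)$ the parent of $T(s)$ (the only possibilities under the no-jumping rule). In each case, tracing through the dummy placements of the Instance Merging procedure, I will show that the set of templates containing at least one endpoint of $(u, d_1)$ coincides with the set of templates containing at least one endpoint of $e$, so the parameter products are equal. After the symmetric argument for edges incident to $t$, contracting the $\infty$-weight edges of the reweighted $\mathcal{G}_2$ produces $G'$, and the chain of equalities delivers the claim.
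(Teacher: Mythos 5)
Your proposal is correct and follows essentially the same route as the paper: Instance Merging on $s$ and then $t$, invoking \Cref{lem:stcut-reweighting} on the resulting template (where $s$ and $t$ are in the root), and then contracting the $\infty$-weight dummy edges to recover the reweighted graph $G'$ of the original template. The only differences are cosmetic — you phrase the equivalence of merging with the all-$s$-$t$ problem via the super-source/super-sink reduction on flows where the paper argues on the cut side, and you spell out the verification that the dummy chains inherit the correct reweighting factors, which the paper asserts without the case analysis.
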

\begin{proof}
Instance Merge $s$ and then $t$ in $\mathcal{G}$ to produce a parametric graph template $\mathcal{G'}$. By definition, all instances of $s$ (and $t$ respectively) must be on the same side of a minimum all-$s$-$t$ cut, this parametric graph template $\mathcal{G}'$ has the same minimum all-$s$-$t$ cut value as the original parametric graph template $\mathcal{G}$. Edge reweighting $\mathcal{G'}$ gives us a graph $\hat G$. From \Cref{lem:stcut-reweighting} we know that a minimum $s$-$t$ cut of $\hat G$ corresponds to the minimum all-$s$-$t$ cut of $\mathcal{G'}$ (which puts all instances of the same vertex on the same side of the cut).

An $\infty$-weight edge never crosses a minimum $s$-$t$ cut and therefore such dummy edges (introduced by the instance merging) from $\hat G$ can be contracted, yielding a graph $G'$. This graph $G'$ is the same graph that we get from edge reweighting the original parametric graph template $\mathcal{G}$.
\end{proof}
Now, the results follows:
\begin{theorem}\label{thm:allstflow}
	Computing a maximum all-$s$-$t$ flow in parametric graph template takes $O(mn)$ time.
\end{theorem}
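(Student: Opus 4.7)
The plan is to combine Lemma~\ref{thm:edge-reweighting} with any known $O(mn)$-time maximum $s$-$t$ flow algorithm (e.g., Orlin's~\cite{DBLP:conf/stoc/Orlin13}) applied to the reweighted graph $G'$, and to argue that each step runs within the claimed budget. Concretely, I would first execute the Edge-Reweighting procedure of \Cref{sec:tech:edge-reweight} on $\mathcal{G}$. As noted there, by performing a single pre-order traversal of the template tree, one can precompute, for each template $T_i$, the product of parameters of all its ancestors (including $T_i$ itself); this takes $O(k) = O(n)$ time. After that, the reweighted capacity $w'(e) = w(e) \prod_{i \in I(e)} P_i$ of each edge $e$ is obtained in $O(1)$ time by looking up the product associated with the deeper of the two templates containing the endpoints of $e$. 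Hence computing $G'$ takes $O(n+m)$ time total.

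Next, I would run Orlin's maximum $s$-$t$ flow algorithm on $G'$. Since $G'$ has exactly the same vertex and edge set as the template graph $G$, it has $n$ vertices and $m$ edges, so this step costs $O(mn)$ time and returns a flow $f'$ in $G'$ whose value, by \Cref{thm:edge-reweighting}, equals the value of a maximum all-$s$-$t$ flow in the instantiation of $\mathcal{G}$. Summing gives the $O(mn)$ bound claimed by the theorem.

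To actually output a max all-$s$-$t$ flow of $\mathcal{G}$ (not just its value) in the same time budget, I would represent it succinctly: for each edge $e$ of the template graph, store the single number $f'(e) / \prod_{i \in I(e)} P_i$, interpreted as the flow along every instance of $e$. This is a valid flow in the instantiation because the capacity constraint $f'(e)/\prod_{i \in I(e)} P_i \leq w(e)$ holds by construction, and conservation at each instance of each vertex follows by the same symmetric spreading argument used in the proof of \Cref{lem:stcut-structure} (every instance of a vertex $v$ sees exactly the same in- and out-flow as $v$ does in $f'$). This representation has size $O(m)$ and can clearly be produced from $f'$ in $O(m)$ additional time.

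The only subtlety worth double-checking is that Edge-Reweighting as stated preserves the $O(mn)$ regime even when some $P_i$ are huge (so $w'(e)$ may be an astronomically large number). Since the standard analysis of Orlin's algorithm is strongly polynomial and counts arithmetic operations on capacities as unit cost, this is not an obstacle; otherwise one would have to reason more carefully about bit complexity. Beyond that, the proof is a direct assembly of Edge-Reweighting, \Cref{thm:edge-reweighting}, and the chosen black-box max-flow subroutine, so no new structural argument is needed here.
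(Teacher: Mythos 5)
Your proposal matches the paper's argument: the paper derives the theorem directly from Lemma~\ref{thm:edge-reweighting} by running an $O(mn)$ maximum-flow algorithm (Orlin's) on the reweighted graph $G'$, which has the same $n$ vertices and $m$ edges as the template graph, with the $O(m)$-time reweighting absorbed into that bound. Your additional details --- the $O(1)$ per-edge lookup of the ancestor-parameter product, the succinct flow representation $f'(e)/\prod_{i\in I(e)}P_i$, and the strongly-polynomial caveat --- are correct elaborations rather than a different route.
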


\subsection{Partial Instantiation}\label{sec:tech:partial-instantiation}

\begin{figure}[t!]
\includegraphics[width=0.45\linewidth]{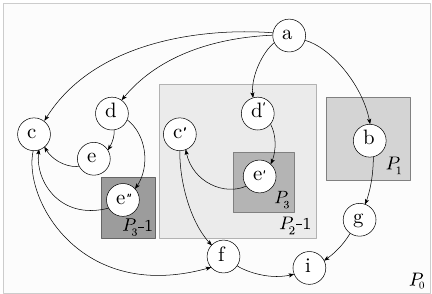}
\centering
\caption{After running Upwards Partial Instantiation from $e$ on the parametric graph template $\mathcal{G}$ from \Cref{fig:intro-example:template}, the vertex $e$ is in the root template. The transformed parametric graph template instantiates the same graph.}
\label{fig:tech:part}
\end{figure}

The technique of \emph{partial instantiation} revolves around instantiating only part of the parametric graph template, depending on the problem at hand. The goal is to choose the partial instantiation such that the remaining problem is solvable by using the symmetry of the problem (e.g., using edge-reweighting). 
Partial instantiation can be seen as an example of the more general technique of \emph{retemplating}. The intuition of retemplating is that in certain cases, it suffices to change the representation of the parametric graph template into another isomorphic parametric graph template to significantly simplify the problem at hand.

Next, show how to move a single vertex $s$ from deep in the template tree to the root, without changing the instantiated graph. This solves the maximum $s$-$t$ problems when $s$ (or $t$) belongs to a template that is deep in the template tree (See \Cref{sec:maxflow-acyc}). We call this technique \emph{Upwards Partial Instantiation} from $s$. For simplicity, let us start with the special case of template-acyclic graphs.

In a template-acyclic parametric graph template, once a path goes from an instance of a template $T_i$ to its parent, it never enters another instance of $T_i$ again. This property implies that, when considering the reachable subgraph from a vertex that is an instance of $s$, we can simply ``merge'' $T(s)$ and all the templates that are ancestors of the template $T(s)$ in the template tree. Formally, this corresponds to deleting $T(s)$ and all the templates that are ancestors of $T(s)$ (except the root) from the parametric graph templates' list of templates.

If the parametric graph template has template-cycles, our goal remains to transform the parametric graph template into an equivalent graph where a particular instance of a vertex $s$ is in the root template. See \Cref{fig:tech:part} for an illustration of Upwards Partial Instantiation.

\miniheading{Algorithm: Partial Instantiation}
Repeat the following until all templates from $T(s)$ to the root have parameter $1$:
\begin{enumerate}
	\item Consider the topmost template $T$ that contains $s$ and has parameter greater than $1$. Let $P_s$ be the number of instances of the template $T$.  
	\item Instantiate the template $T$ twice. Create a new parametric graph template that has the two instances as templates, where the first template has parameter $1$ and the second template has parameter $P_s-1$. The vertices in the second template are relabeled ($s$ is in the one with parameter $1$).
\end{enumerate}
Now, merge $T(s)$ and all the templates that are ancestors of $T(s)$, leaving $s$ in the root template. 

Because this process performs the same rewriting of the parametric graph template as instantiation, just in a different order and stopping early, this process creates an isomorphic parametric graph template. Every iteration adds at most $n$ vertices and $m$ edges and there are at most $h$ iterations. We conclude that:
\begin{observation}
	Upwards Partial Instantiation from $s$ produces an isomorphic parametric graph template with at most $h$ additional templates and $O(nh)$ vertices and $O(mh)$ edges in the template graph.
\end{observation}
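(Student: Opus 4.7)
My plan is to verify three things: (i) the output parametric graph template is isomorphic to $\mathcal{G}$, (ii) the vertex $s$ ends up in the root template, and (iii) the size overhead is as claimed. The approach combines a loop invariant with per-iteration accounting.

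For isomorphism, I would argue that each iteration is a valid partial instantiation step. The iteration selects the topmost ancestor $T$ of $T(s)$ with $P_T>1$ and replaces it by two templates $T'$ (parameter $1$, retaining the original labels of $T$'s subtree, still containing $s$) and $T''$ (parameter $P_T-1$, with all vertices in $T$'s subtree relabeled to fresh names). This is precisely the instantiation rewriting rule applied to $T$ but grouping the $P_T$ resulting copies as $1 + (P_T-1)$ rather than fully unrolling them: subsequent instantiation of $T'$ yields one copy and of $T''$ yields $P_T-1$ copies of $T$'s contents, matching the original $P_T$ copies, and the duplication of cross-template edges between $T$'s internal vertices and its boundary vertices (which stay in the parent template) mirrors what full instantiation would do. The final merging of $T(s)$ and all of its ancestors (all with parameter $1$ by the loop termination condition) is also isomorphism-preserving: a parameter-$1$ template contributes a single copy of its contents and is indistinguishable from placing those contents directly in the parent template.

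For termination and the iteration count, I would maintain the invariant that after the $j$-th iteration, every template on the path from root to $T(s)$ at depth at most $j$ has parameter $1$. The invariant holds vacuously at $j=0$ and is preserved because the $(j+1)$-st iteration converts the shallowest remaining parameter-greater-than-one ancestor into a parameter-$1$ template. Since the path has length at most $h$, the loop halts after at most $h$ iterations, after which every ancestor of $T(s)$ has parameter $1$ and can be merged into the root. For size, each iteration duplicates the subtree rooted at $T$, which has at most $n$ vertices and $m$ edges; summing over the at most $h$ iterations yields the claimed $O(nh)$ vertex and $O(mh)$ edge overhead, and an additional template count of at most $h$ along the processed spine.

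The main obstacle is verifying that the splitting step produces a syntactically valid parametric graph template—in particular, a laminar family of templates respecting the no-jumping rule—and that the two pieces $T'$ and $T''$ together with their relabeled subtrees correctly model the $1+(P_T-1)$ grouping of what the instantiation procedure would produce. This requires careful bookkeeping of boundary vertices (which remain in the parent and are not duplicated) and of the cross-template edges incident to them, together with checking that the laminar nesting of the new templates with respect to $T$'s former parent and the children of $T$ is preserved.
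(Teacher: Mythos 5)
Your argument is correct and follows essentially the same route as the paper, whose entire justification is that the procedure ``performs the same rewriting as instantiation, just in a different order and stopping early'' (hence isomorphism), with at most $h$ iterations each adding at most $n$ vertices and $m$ edges; your $1+(P_T-1)$ grouping, depth invariant, and per-iteration accounting are a faithful elaboration of exactly that. The ``obstacle'' you flag (laminarity, boundary-vertex bookkeeping, and the no-jumping rule after the split) is not addressed by the paper either, so your write-up is, if anything, more careful than the original.
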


\subsection{Maximum Single-$s$-$t$ Flow}\label{sec:maxflow-acyc}

We give a partial instantiation and edge reweighting approach to maximum single-$s$-$t$ flow. For there to be a flow through some instance, it must lie along an $s$-$t$ path. Hence, we can use Upwards Partial Instantiation twice to ensure that $s$ and $t$ lie in the root template. Then, we use Edge Reweighting.

\miniheading{Algorithm: Single-s-t Flow} We solve maximum single-$s$-$t$ flow as follows:
\begin{enumerate}
	\item Perform upwards partial instantiation from $s$. 
	\item Perform upwards partial instantiation from $t$.
	\item Construct an edge-reweighted graph $G'$.
	\item Run a maximum $s$-$t$ flow algorithm on the partially instantiated and reweighted graph $G'$.
\end{enumerate} 

\begin{theorem}\label{thm:singlstflow}
	Computing a maximum single-$s$-$t$ flow in parametric graph template takes $O(mn h)$ time.
\end{theorem}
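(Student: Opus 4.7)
The plan is to argue correctness using the isomorphism guarantee of Upwards Partial Instantiation together with \Cref{thm:edge-reweighting}, and then bound the runtime by carefully analyzing the size of each intermediate graph. The algorithm reduces the maximum single-$s$-$t$ flow problem to the already-handled case in which $s$ and $t$ belong to the root template, by first moving them there via two partial instantiations and then invoking edge reweighting.

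First I would establish correctness. By the observation on Upwards Partial Instantiation, Steps~1 and 2 produce an isomorphic parametric graph template in which the chosen instance of $s$ and the chosen instance of $t$ both belong to the root template. Because isomorphism preserves instantiations, the maximum single-$s$-$t$ flow between the chosen instances equals the corresponding flow in the transformed template. In the transformed template, both $s$ and $t$ have a unique instance (since the root has parameter $1$), so the maximum single-$s$-$t$ flow equals the maximum all-$s$-$t$ flow. By \Cref{thm:edge-reweighting}, the reweighted graph $G'$ produced in Step~3 has a maximum $s$-$t$ flow equal to this maximum all-$s$-$t$ flow, which Step~4 returns.

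Next I would bound the runtime. By the observation on Upwards Partial Instantiation, Step~1 produces a parametric graph template with $O(nh)$ vertices and $O(mh)$ edges and runs in time proportional to its output. The key point for Step~2 is that after Step~1, every non-root template still has at most $n$ vertices and $m$ edges, because it is either an original template of $\mathcal{G}$ or an exact copy of such a template introduced by duplication. Consequently, a second application of UPI contributes only $O(nh)$ additional vertices and $O(mh)$ additional edges rather than multiplying the bound of Step~1 by another factor of $h$. Step~3 is linear in the current graph size. Step~4 runs a maximum $s$-$t$ flow algorithm (e.g., Orlin's) on $G'$, which has $O(nh)$ vertices and $O(mh)$ edges, giving the claimed $O(mnh)$ bound.

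The main obstacle is the size analysis in Step~2: a naive composition of the UPI size bound with itself would yield $O(nh^2)$ vertices and $O(mh^2)$ edges and an excessive max-flow runtime. Establishing the invariant that non-root templates retain size at most $n$ throughout UPI, which follows from the fact that UPI only duplicates templates as exact copies and merges parameter-$1$ ancestors into the root, is what keeps the overhead linear in $h$ rather than quadratic.
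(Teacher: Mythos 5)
Your reduction and correctness argument coincide with the paper's: two applications of Upwards Partial Instantiation move the chosen instances of $s$ and $t$ into the root template of an isomorphic parametric graph template, after which \Cref{thm:edge-reweighting} (via \Cref{lem:stcut-reweighting}) reduces the problem to one ordinary maximum $s$-$t$ flow computation. The paper gives no explicit proof beyond the algorithm and the observation in \Cref{sec:tech:partial-instantiation}, so your non-compounding invariant for the second partial instantiation --- that every non-root template after Step~1 is an exact copy of an original template, so Step~2 adds only $O(nh)$ vertices and $O(mh)$ edges rather than $O(nh^2)$ and $O(mh^2)$ --- is a genuine and worthwhile addition that the paper leaves implicit.

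However, the final arithmetic step is a gap. You correctly conclude that $G'$ has $N'=O(nh)$ vertices and $M'=O(mh)$ edges, but Orlin's algorithm then costs $O(M'N')=O(mh\cdot nh)=O(mnh^{2})$, not the claimed $O(mnh)$. Nothing in your argument (or in the paper's observation, which states exactly the same size bounds) brings this down by a factor of $h$: the $O(nh)$ vertex bound is tight when the templates are nested so that each template on the root-to-$T(s)$ path contains a constant fraction of all vertices, in which case each of the $h$ duplications adds $\Theta(n)$ vertices and $\Theta(m)$ edges. To actually obtain $O(mnh)$ you would need either a sharper structural bound (e.g., that one of the two dimensions of the partially instantiated graph stays $O(n)$ or $O(m)$ respectively, which is false in general) or a flow algorithm whose cost on $G'$ is $O(M'\cdot n)$ rather than $O(M'N')$. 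As written, your proof establishes $O(mnh^{2})$; the step asserting $O(mnh)$ does not follow.
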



\section{Allowing Edges Between Sibling Templates}

So far, we have disallowed any edges between instances of the same template. This limits the types of graphs which have a small template graph. For example, a path of length $n$  requires a template graph with $n$ nodes.
We can extend the model by allowing an instance to have edges \emph{to another instance of the same template}. These edges can, for example, more efficiently model sequential chains (paths), convolutional networks, and grids. We call these edges \emph{sibling edges} (because they connect siblings).
A \emph{sibling edge} (u,v) of template $T_i$ connects (in the template graph of $T_i$) a vertex $u$ that belongs to template $T_i$ to a vertex $v$ that \emph{also} belongs to $T_i$. Every sibling edge  $e=(u,v)$ of template $T_i$ is associated with a \emph{bijective} (and computable) \emph{sibling function} $f_e : \{0, \dotsc, P_i-1\} \rightarrow \{0, \dotsc, P_i-1\}$ which tells us that if the head of edge $e$ is in instance $j$ of the template $T_i$, then the tail of edge $e$ is in instance $f(j)$ of the template $T_i$.
%

Note that in the model with sibling edges, a path of length $n$ can be represented with two nodes instead of $n$ nodes and a $1$-dimensional cross-correlation of two $n$-dimensional signal can be represented with $2$ nodes.

\begin{figure}[t]
\centering
	\includegraphics[width=0.5\linewidth]{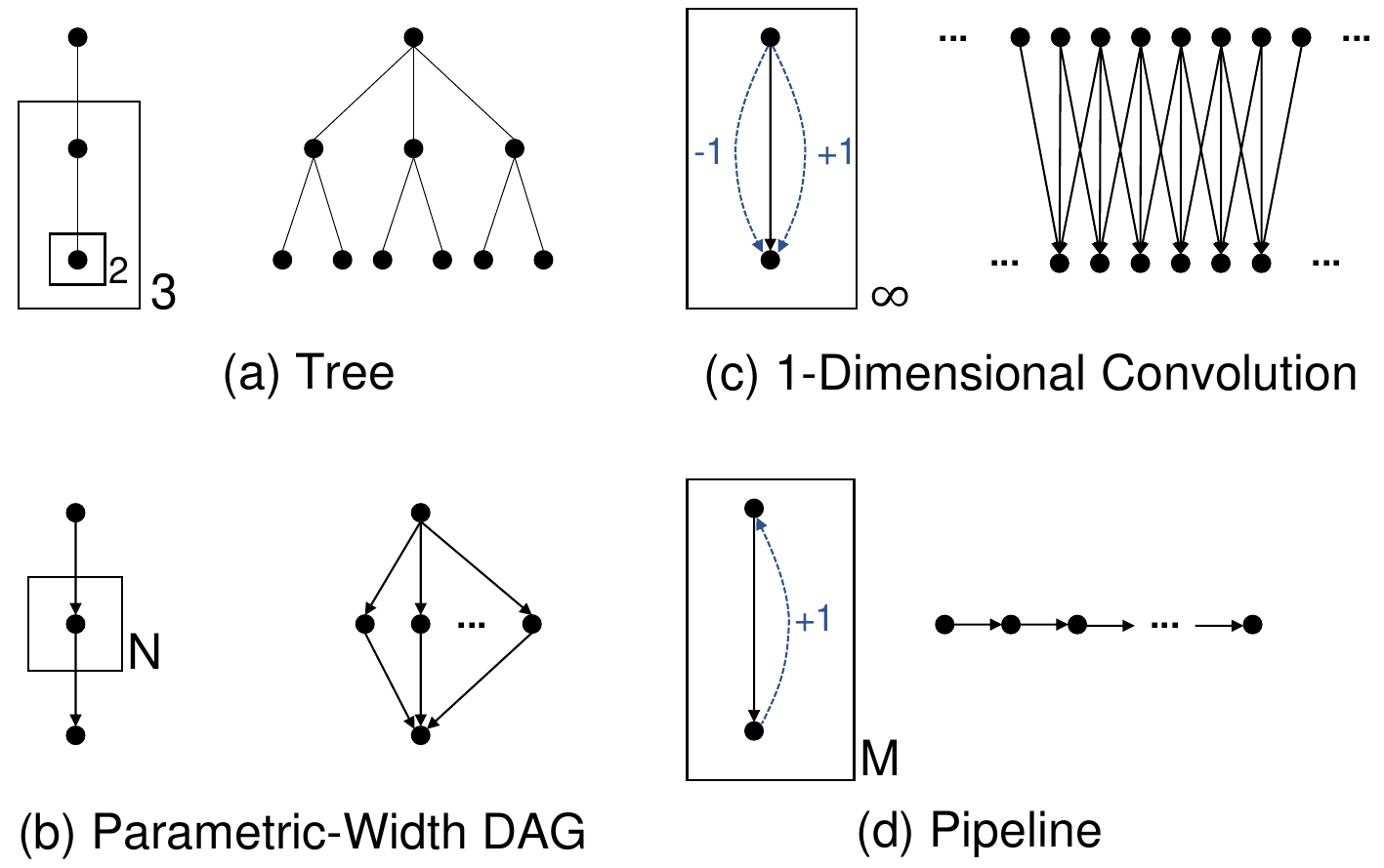}
	\vspace{-0.5em}
	\caption{Graph Templates and their expanded/instiantiated counterparts. An edge $e$ labelled with $\pm \Delta$ indicates a sibling edge $e$ with sibling function $f_e(x)=x+\Delta$}
	\label{fig:examples}
\end{figure}

The structural \Cref{lem:stcut-structure} for edge reweighting still holds with sibling edges.

\begin{lemma}\label{lem:stcut-structure-sibling}
In a parametric graph template $\mathcal{G}=(G,\mathcal{T}, \mathcal{P})$ \emph{with sibling edges}, if $s$ and $t$ are in the root template, there is a minimum $s$-$t$ cut of the instantiation of $\mathcal{G}$ where every instance of every vertex is on the same side of the cut.
\end{lemma}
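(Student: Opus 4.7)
The plan is to adapt the inductive proof of \Cref{lem:stcut-structure}, inducting on the number of templates in $\mathcal{G}$. The base case of a single (root) template is immediate. For the inductive step, take a minimum $s$-$t$ cut $C = (V_s, V_t)$ of the instantiation, pick a child template $T_i$ of the root, and let $B_s, B_t$ denote the boundary vertices of $T_i$ lying in $V_s, V_t$.

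In the case where $B_t = \emptyset$ (symmetrically $B_s = \emptyset$), I claim that $C$ can be modified without increasing its value so that every vertex instance inside any $T_i$-instance (including ones belonging to descendants of $T_i$) is placed in $V_s$. To verify that no new edge is cut by this move, I check that intra-$T_i$-instance edges and sibling edges of descendant templates of $T_i$ (which connect instances inside a common parent-of-descendant instance, hence inside one $T_i$-instance) have both endpoints moved together; sibling edges of $T_i$ itself connect different $T_i$-instances which are all moved simultaneously; and the only cross-template edges leaving this block go to the boundary of $T_i$, which lies in $B_s\subseteq V_s$. Hence the modified cut is still minimum and respects $T_i$.

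In the case where both $B_s, B_t \neq \emptyset$, I mirror the original construction. Merge $B_s$ into a single vertex $s'$ and $B_t$ into $t'$, which preserves the min cut value and the desired instance-uniformity. Let $H$ be the subgraph on the $P_i$ instances of $T_i$ together with $s', t'$, and define $\mathcal{G}''$ exactly as in the proof of \Cref{lem:stcut-structure}: take the subgraph of $G$ induced on $T_i \cup \{s', t'\}$ with any $s'$-$t'$ edges deleted, put $s', t'$ together with all vertices belonging to $T_i$ into a new root template of parameter $1$, and retain the descendants of $T_i$ as subtemplates. A sibling edge of $T_i$ in $G$ becomes a regular edge of the new root template of $\mathcal{G}''$, since parameter $1$ forces the sibling function to be the identity. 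Induction then gives an instance-uniform minimum $s'$-$t'$ cut $C''$ of $\mathcal{G}''$ with dual maximum flow $f''$ of value $\mu$.

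The new verification relative to the sibling-free case is that $f''$ and $C''$ lift correctly to $H$ despite sibling edges of $T_i$ gluing the $P_i$ slices of $H$ together. I define the lifted flow $f$ by assigning $f''(e)$ to every instance of every $\mathcal{G}''$-edge $e$. Conservation at a vertex instance $v_k$ in slice $k$ reduces to conservation at the corresponding vertex of $\mathcal{G}''$; the subtle contribution is from sibling edges of $T_i$, and here bijectivity of $f_e$ ensures that exactly one sibling-edge instance enters slice $k$ at the head and one leaves it at the tail, each carrying $f''(e)$, matching the flattened edge's contribution in $\mathcal{G}''$. Lifting $C''$ places every $H$-instance of a vertex on the same side as its $\mathcal{G}''$ counterpart, so each cut $\mathcal{G}''$-edge lifts to all $P_i$ of its $H$-instances being cut. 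Both the lifted flow and lifted cut have value $P_i\mu$, so by strong duality the cut is minimum for $H$; undoing the merging of $B_s, B_t$ produces an instance-uniform minimum $s$-$t$ cut for $\mathcal{G}$. The main obstacle is precisely this sibling-edge bookkeeping in the lift, where the bijectivity of $f_e$ is the essential property that makes the per-slice counts of flow and cut line up as in the non-sibling case.
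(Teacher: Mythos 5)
Your proposal follows essentially the same route as the paper, which only sketches this proof by saying to rerun the inductive argument of \Cref{lem:stcut-structure} and to use the bijectivity of the sibling functions to verify conservation of the lifted flow. Your write-up is correct and simply fills in the bookkeeping (the empty-boundary case, sibling edges of $T_i$ collapsing to ordinary edges in the parameter-$1$ root of $\mathcal{G}''$, and the per-slice in/out counts) that the paper's sketch leaves implicit.
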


\begin{proof}[sketch of Lemma 1 with sibling edges] Extend the proof of \Cref{lem:stcut-structure} as follows: Proceed to construct the flow $f$ as usual. Then, observe that the conservation constraint on the flow $f$ is satisfied because the sibling functions are bijective: In the instantiated graph, the total flow going in and out of an instance of $v$ is the same as for vertex $v$ for $f''$ in $G''$. The value of the flow $f$ is $P_i \cdot v(f)$.
\end{proof}

Hence, the results on maximum all-$s$-$t$ flow and maximum single-$s$-$t$ flow hold analogously in the presence of sibling edges within the same bounds.


\section{Conclusion}
In this work, we explored the notion of structural parameterization in graphs. We show how graph templates correspond to the computation graphs of parallel programs. Our model leads to a $O(mn)$ time algorithm for a template version of maximum $s$-$t$ flow (and hence minimum $s$-$t$ cuts). These flows provide upper bounds on the data movement of partitions of certain parallel loop programs.

Other interesting problems would include partitions into multiple parts and subgraph isomorphism. Moreover, future work could explore lower bounds for parametric graph template algorithms.



\vspace{0.5em}
\miniheading{Acknowledgements}
This work received support from the PASC project DaCeMI and from the European Research
Council under the European Union's Horizon 2020 programme (Project PSAP, No. 101002047), as well as funding from EuroHPC-JU under grant DEEP-SEA, No. 955606.

\bibliographystyle{splncs04}
\bibliography{parabib.bib}

\end{document}